\newcolumntype{M}[1]{>{\centering\arraybackslash}m{#1}}
\newcolumntype{N}{@{}m{0pt}@{}}
\newtheorem{theorem}{{Theorem}}
\newtheorem{proposition}[theorem]{{Proposition}}
\newtheorem{definition}{{Definition}}
\newcommand{\cL}{{\cal L}}
\DeclareMathAlphabet{\mathbfsl}{OT1}{ppl}{b}{it} 
\newcommand{\be}[1]{\begin{equation}\label{#1}}
\newcommand{\ee}{\end{equation}}
\renewcommand{\le}{\leqslant} 
\renewcommand{\leq}{\leqslant}
\newcommand{\Pref}[1]{Pro\-po\-si\-tion\,\ref{#1}}
\newcommand{\Cref}[1]{Co\-ro\-lla\-ry\,\ref{#1}}
\begin{document}
\title{Reed-Muller Subcodes: Machine Learning-Aided Design of Efficient Soft Recursive Decoding} 
\author{
	\parbox{1.8 in}{\centering Mohammad Vahid Jamali\\
		{\small EECS Department}\\\vspace{-0.1cm}
		{\small University of Michigan}\\\vspace{-0.1cm}
		{\ttfamily\bfseries\small mvjamali@umich.edu}\\ \vspace{0.3cm}
	Hessam Mahdavifar\\
	{\small EECS Department}\\\vspace{-0.1cm}
	{\small University of Michigan}\\\vspace{-0.1cm}
	{\ttfamily\bfseries\small hessam@umich.edu}}
	\and
	\parbox{1.8 in}{\centering Xiyang Liu\\
		{\small CSE Department}\\\vspace{-0.1cm}
	{\small University of Washington}\\\vspace{-0.1cm}
		{\ttfamily\bfseries\small xiyangl@cs.washington.edu}\\ \vspace{0.3cm}
	Sewoong Oh\\
	{\small CSE Department}\\\vspace{-0.1cm}
	{\small University of Washington}\\\vspace{-0.1cm}
	{\ttfamily\bfseries\small sewoong@cs.washington.edu}}
	\and
	\parbox{2.5 in}{\centering Ashok Vardhan Makkuva\\
		{\small ECE Department}\\\vspace{-0.1cm}
		{\small University of Illinois at Urbana-Champaign}\\\vspace{-0.1cm}
		{\ttfamily\bfseries\small makkuva2@illinois.edu}\\ \vspace{0.3cm}
	     {Pramod Viswanath}\\
		{\small ECE Department}\\\vspace{-0.1cm}
		{\small University of Illinois at Urbana-Champaign}\\\vspace{-0.1cm}
		{\ttfamily\bfseries\small pramodv@illinois.edu}}
}

 \maketitle
\begin{abstract}
Reed-Muller (RM) codes
are conjectured to achieve the capacity of any binary-input memoryless symmetric (BMS) channel, and are observed to have a comparable performance to that of random codes in terms of scaling laws. On the negative side, RM codes lack efficient decoders with performance close to that of a maximum likelihood decoder for general parameters. Also, they only admit certain discrete sets of rates. In this paper, we focus on subcodes of RM codes with flexible rates that can take any code dimension from $1$ to $n$, where $n$ is the blocklength. We first extend the recursive projection-aggregation (RPA) algorithm proposed recently by Ye and Abbe for decoding RM codes. To lower the complexity of our decoding algorithm, referred to as subRPA in this paper, we investigate different ways for pruning the projections. We then derive the soft-decision based version of our algorithm, called soft-subRPA, that is shown to improve upon the performance of subRPA. Furthermore, it enables training a machine learning (ML) model to search for \textit{good} sets of projections in the sense of minimizing the decoding error rate.
Training our ML model enables achieving very close to the performance of full-projection decoding with a significantly reduced number of projections. For instance, our simulation results on a $(64,14)$ RM subcode show almost identical performance for full-projection decoding and  pruned-projection decoding with $15$ projections picked via training our ML model. This is equivalent to lowering the complexity by a factor of more than $4$ without sacrificing the decoding performance.
\end{abstract}

\section{Introduction}\label{intro}
Reed-Muller (RM) codes are among the oldest families of error-correcting codes, and their origin backs to almost seven decades ago \cite{reed1954class,muller1954application}. They have received significant renewed interest after the breakthrough invention of polar codes \cite{arikan2009channel}, given the close connection between the two classes of codes. The generator matrices for both RM and polar codes are obtained from the same square matrices, the Kronecker powers of a $2 \times 2$ matrix, though by different rules for selecting rows. In fact, such a selection of rows for polar codes is channel-specific but the RM encoder picks the rows with the largest Hamming weights. Therefore, RM codes have a universal construction. Additionally, RM codes provably achieve the Shannon capacity of binary erasure channels (BECs) at any constant rate \cite{kudekar2017reed}, and that of binary symmetric channels (BSCs) at extremal rates, i.e., at rates converging to zero or one \cite{abbe2015reed}. The long-time belief that RM codes achieve the Shannon capacity over any binary-input memoryless symmetric (BMS) channel, however, still remains an open problem \cite{abbe2020reed}. RM codes are also conjectured to have characteristics similar to those of random codes in terms of both weight enumeration \cite{kaufman2012weight} and scaling laws \cite{hassani2018almost}.

Despite their excellent performance with maximum likelihood decoders, RM codes still suffer from the lack of an efficient decoding algorithm for general parameters. Among the earlier works on decoding RM codes \cite{reed1954class,dumer2004recursive,dumer2006soft2,dumer2006soft,sakkour2005decoding,saptharishi2017efficiently,santi2018decoding}, Dumer’s recursive list decoding algorithm \cite{dumer2004recursive,dumer2006soft2,dumer2006soft} provides a trade-off between the decoding complexity and error probability. In other words, it is capable of achieving close to maximum likelihood decoding performance for large enough, e.g., exponential in blocklength, list sizes. Recently, Ye and Abbe \cite{ye2020recursive} proposed a recursive projection-aggregation (RPA) algorithm for decoding RM codes. The RPA algorithm first projects the received corrupted codeword on its cosets. It then recursively decodes the projected codes to, finally, construct the decoded codeword by properly aggregating them. Very recently, building upon the projection pruning idea in \cite{ye2020recursive}, a method for reducing the complexity of the RPA algorithm has been explored in \cite{fathollahi2020sparse}.

Besides lacking an efficient decoder in general, the structure of RM codes does not allow choosing a flexible rate. To clarify this, let $k$ and $n$ denote the code dimension and blocklength, respectively. Due to the underlying Kronecker product structure of RM codes, the code blocklength is a power of two, i.e., $n=2^m$, where $m$ is a design parameter. Additionally, RM codes posses another parameter $r$, that stands for the \textit{order} of the code, where $0\leq r\leq m$. Then, given the code blocklength $n$, one can only construct RM codes with $m+1$ possible values for the code rate, each corresponding to a given code order $r$.

This research is inspired by the aforementioned two critical issues of RM codes. More specifically, we target subcodes of RM codes,
and our primary goal is to come up with low-complexity decoders for the RM subcodes. To this end, we first extend the RPA algorithm to what we call ``subRPA'' in this paper. Similar to the RPA algorithm, subRPA starts by projecting the received corrupted codeword onto the cosets. However, since the projected codes are no longer RM codes of lower orders, their corresponding generator matrices have different ranks (i.e., different code dimensions). SubRPA applies the optimal maximum a posteriori (MAP) decoder at the bottom layer given the low dimension of the projected codes at that layer. It then aggregates the reconstructions to recursively decode the received codeword. Next, we derive the soft-decision based version of our algorithm, called ``soft-subRPA'', that improves upon the performance of subRPA.
We further investigate various ways for pruning the projections to reduce the complexity of the proposed algorithms with negligible performance loss. Enabled by our soft-subRPA algorithm, we train a machine learning (ML) model to search for \textit{good} sets of projections. We also empirically investigate encoding of RM subcodes.

In the encoding part, we observe that constructing the code generator matrix with respect to a lower complexity for our algorithms results in a superior performance compared to a higher complexity generator matrix. Also, our empirical results for pruning projections suggest a superior performance for the projection sets incurring a lower decoding complexity. This together with our observation on the encoding part unravels a two-fold gain for our proposed algorithms: a better performance for a lower complexity.
Finally, we find out that carefully training our ML model provides the possibility to find the best sets of projections that achieve very close to the performance of full-projection decoding with much smaller number of projections. 

\begin{figure}[t]\vspace{-0.075in}
	\centering
	\includegraphics[trim=0.5cm 0.2cm 0 0,width=3.6in]{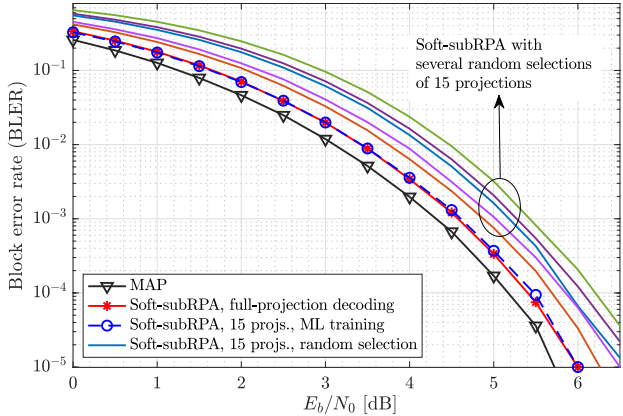}
	\caption{Performance comparison of the MAP decoder with full- and pruned-projection soft-subRPA decoding for a $(64,14)$ RM subcode.}
	\label{fig0}
	\vspace{-0.1in}
\end{figure}
Figure \ref{fig0} demonstrates the potentials of our ML-aided soft decoding algorithm, i.e., soft-subRPA with ML-aided projection pruning, in efficiently decoding RM subcodes. In this case study (detailed later in Section \ref{sec_projprun}), an RM subcode with dimension $k=14$ and blocklength $n=64$ is considered. Our ML-based projection pruning scheme, with only $15$ projections, is able to achieve an almost identical performance to that of full-projection soft-subRPA decoding with $63$ projections. This is equivalent to reducing the complexity by a factor of $4$, approximately, without sacrificing the performance. Our low-complexity ML-based pruned-projection decoding has then only about $0.25$ \si{dB} gap with the performance of the MAP decoding while randomly selecting the subsets of projections does not often provide a competitive performance.


\section{Preliminaries}\label{Prelim}
In this section, we briefly review RM codes (from an algebraic point of view) and the RPA algorithm. The reader is referred to \cite{ye2020recursive} for more details on the RPA algorithm.
\subsection{RM Codes}\label{RM_rev}
Let $k$ and $n$ denote the code dimension and blocklength, respectively. Also, let $m=\log_2 n$. The $r$-th order RM code of length $2^m$, denoted as $\mathcal{RM}(m,r)$, is then defined by the following set of vectors as the basis
\begin{align}\label{rm_basis}
\{\boldsymbol{v}_m(\mathcal{A}):~\mathcal{A}\subseteq[m],|\mathcal{A}|\le r\},
\end{align}
where $[m]:=\{1,2,\dots,m\}$, $|\mathcal{A}|$ denotes the size of the set $\mathcal{A}$, and $\boldsymbol{v}_m(\mathcal{A})$ is a row vector of length $2^m$ whose components are indexed by binary vectors $\boldsymbol{z}=(z_1,z_2,\dots,z_m) \in \{0,1\}^m$ and are defined as $\boldsymbol{v}_m(\mathcal{A},\boldsymbol{z}) = \prod_{i\in \mathcal{A}} z_i$. It can be observed from \eqref{rm_basis} that $\mathcal{RM}(m,r)$ has a dimension of $k:=\sum_{i=0}^r \binom{m}{i}$.

According to \eqref{rm_basis}, the (codebook of) $\mathcal{RM}(m,r)$ code is defined as the following set of binary vectors
\begin{align}\label{rm_codebook}
\mathcal{RM}(m,r):= \left\{\sum_{\mathcal{A}\subseteq[m],|\mathcal{A}|\le r}\hspace{-0.5cm}u(\mathcal{A}) \boldsymbol{v}_m(\mathcal{A}): u(\mathcal{A})\in\{0,1\}~
 \forall \mathcal{A}\right\}.
\end{align}
Therefore, considering a polynomial ring $\mathbb{F}_2[Z_1,Z_2,\dots,Z_m]$ of $m$ variables, the components of $\boldsymbol{v}_m(A)$ are the evaluations of the monomial $\prod_{i\in \mathcal{A}}Z_i$ at points $\boldsymbol{z}$ in the vector space $\mathbb{E}:=\mathbb{F}_2^m$. Moreover, each codeword $\boldsymbol{c}=(\boldsymbol{c}(\boldsymbol{z}), \boldsymbol{z}\in\mathbb{E})\in\mathcal{RM}(m,r)$, that is also indexed by the binary vectors $\boldsymbol{z}$, is defined as the evaluations of an $m$-variate polynomial with degree at most $r$ at points $\boldsymbol{z}\in\mathbb{E}$.

\subsection{RPA Decoding Algorithm}\label{RPA_rev}
The RPA algorithm is comprised of the following three building blocks/operations. 
\subsubsection{Projection} The RPA algorithm starts by projecting the received corrupted binary vector (in the case of BSC) or the log-likelihood ratio (LLR) vector of the channel output (in the case of general binary-input memoryless channels) into the subspaces of $\mathbb{E}$. Considering $\mathbb{B}$ as a $s$-dimensional subspace of $\mathbb{E}$, with $s\leq r$, the quotient space $\mathbb{E}/\mathbb{B}$ contains all the cosets of $\mathbb{B}$ in $\mathbb{E}$. Each coset $\boldsymbol{\tau}$ has the form $\boldsymbol{\tau}=\boldsymbol{z}+\mathbb{B}$ for some $\boldsymbol{z}\in\mathbb{E}$. Then, in the case of BSC, the projection of the channel binary output $\boldsymbol{y}=(\boldsymbol{y}(\boldsymbol{z}), \boldsymbol{z}\in\mathbb{E})$ on the cosets of $\mathbb{B}$ is defined as
\begin{align}  \label{proj_bsc}
\boldsymbol{y}_{/\mathbb{B}} :=\big(\boldsymbol{ y}_{/\mathbb{B}}(\boldsymbol{\tau}), \boldsymbol{\tau}\in\mathbb{E}/\mathbb{B} \big), \text{~s.t.~~} \boldsymbol{ y}_{/\mathbb{B}}(\boldsymbol{\tau}) := \bigoplus_{\boldsymbol{z}\in \boldsymbol{\tau}} \boldsymbol{ y}(\boldsymbol{z}),
\end{align}
where $\bigoplus$ denotes the coordinate-wise addition in $\mathbb{F}_2$. For the binary-input memoryless channels the RPA algorithm works on the projection of the channel output LLR vector $\boldsymbol{l}$. In the case of a one-dimensional subspace $\mathbb{B}$, the projected LLR vector can be obtained as $\boldsymbol{l}_{/\mathbb{B}} :=(\boldsymbol{l}_{/\mathbb{B}}(\boldsymbol{\tau}), \boldsymbol{\tau}\in\mathbb{E}/\mathbb{B})$, where
\begin{align}  \label{proj_awgn}
\boldsymbol{l}_{/\mathbb{B}}(\boldsymbol{\tau})\!=\!\ln\!\Big(\!\exp \big(\sum_{\boldsymbol{z}\in\boldsymbol{\tau}} \boldsymbol{l}(\boldsymbol{z})\big)\!+\!1\Big)\!-\!
\ln \Big( \sum_{\boldsymbol{z}\in \boldsymbol{\tau}} \exp(\boldsymbol{l}(\boldsymbol{z})) \Big).
\end{align}
\subsubsection{Decoding the Projected Outputs}
Once the decoder projects the channel output ($\boldsymbol{y}$ or $\boldsymbol{l}$), it starts recursively decoding the projected outputs, i.e., it projects them into new subspaces and continues until the projected outputs correspond to order-$1$ RM codes. The decoder then applies the fast Hadamard transform (FHT) \cite{macwilliams1977theory} to efficiently decode order-$1$ codes. Using the FHT algorithm, one can implement the MAP decoder for the first-order RM codes with complexity $\mathcal{O}(n\log n)$ instead of $\mathcal{O}(n^2)$. Once the first-order codes are decoded, the algorithm \textit{aggregates} the outputs (as explained next) to decode the codes at a higher layer. The decoder may also iterate the whole process, at each middle decoding step, several times to ensure the convergence of the algorithm.
\subsubsection{Aggregation} At each layer in the decoding process (and each point/node in the decoding tree), the decoder needs to \textit{aggregate} the output of the channel at that point with the decoding results of the next (underneath) layer to update the channel output. Note that the channel output at a given \textit{point} can be either the actual channel output ($\boldsymbol{y}$ or $\boldsymbol{l}$) or the projected ones, depending on the position of that point in the decoding tree of the recursive algorithm. Several aggregation algorithms are presented in \cite{ye2020recursive} for one- and two-dimensional subspaces. We refer the reader to \cite{ye2020recursive} for more details on the  aggregation methods.


\section{Efficient Decoding of RM Subcodes}
Let $\boldsymbol{F}=\begin{bmatrix}
1 & 0\\ 1&1
\end{bmatrix}$, and define $\boldsymbol{P}_{n\times n}=\boldsymbol{F}^{\otimes m}$, i.e., the $m$-th Kronecker power of $\boldsymbol{F}$. It can be observed that the encoding of $\mathcal{RM}(m,r)$ (described in Section \ref{RM_rev}) can be equivalently obtained by choosing the rows of the square matrix $\boldsymbol{P}_{n\times n}$ that have a Hamming weight of at least $2^{m-r}$. The resulting generator matrix $\boldsymbol{G}_{k\times n}$ then has exactly $\binom{m}{i}$ rows with the Hamming weight $n/2^i$, for $0\leq i\leq r$.

Note that the RM encoder does not allow choosing any desired code dimension; it should be of the form $k=\sum_{i=0}^{r}\binom{m}{i}$ for some $r\in\{0,1,\cdots,m\}$. Suppose that we want to construct a subcode of $\mathcal{RM}(m,r)$ with a dimension $k$ such that $k_l< k< k_u$, where $k_l :=\sum_{i=0}^{r-1}\binom{m}{i}$, $r\in[m]$, and $k_u :=\sum_{i=0}^{r}\binom{m}{i}$. Given that the construction of RM codes corresponds to picking rows of $\boldsymbol{P}_{n\times n}$ that have the highest Hamming weights, the first $k_l$ rows of the generator matrix $\boldsymbol{G}_{k\times n}$ will be the same as the generator matrix of the lower rate RM code, i.e.,  $\mathcal{RM}(m,r-1)$, that has a Hamming weight of at least $2^{m-r+1}$. It then remains to pick extra $k-k_l$ rows from $\boldsymbol{P}_{n\times n}$. These will be picked from the additional $k_u-k_l=\binom{m}{r}$ rows in $\mathcal{RM}(m,r)$ since they all have the same Hamming weight of $2^{m-r}$ which is the next largest Hamming weight. In a sense, we limit our attention to RM subcodes that, roughly speaking, \textit{sit} between two RM codes of consecutive orders. More specifically, they are subcodes of $\mathcal{RM}(m,r)$ and also contain $\mathcal{RM}(m,r-1)$ as a subcode, for some $r \in [m]$. The question is then how to choose the extra $k-k_l$ rows out of these $\binom{m}{r}$ rows of weight $2^{m-r}$ to construct an RM subcode of dimension $k$ as specified above? 
This is a very important question requiring a separate follow-up work and is beyond the scope of this paper. In the meantime, we provide some insights regarding the encoding of RM subcodes in Section \ref{sec_encoding} after describing our decoding algorithms in Sections \ref{sec_subRPA} and \ref{sec_Soft-subRPA} with respect to a generic generator matrix $\boldsymbol{G}_{k\times n}$.
Our results show that randomly selecting a subset of these rows is not always good. Indeed, some selections are better that the others, and also the set of \textit{good} rows can depend on the decoding algorithm.

\subsection{SubRPA Decoding Algorithm}\label{sec_subRPA}
Before we delve into the description of our decoding algorithms, we first need to emphasize some important facts.

\noindent{\textbf{Remark 1.}} The result of the projection operation corresponds to a code with the generator matrix that is formed by merging (i.e., binary addition of) the columns of the original code generator matrix indexed by the cosets of the projection subspace. This is clear for the BSC model, as formulated in \eqref{proj_bsc}. Additionally, for general BMS channels, the objective is to estimate the projected codewords $\boldsymbol{ c}_{/\mathbb{B}}(\boldsymbol{\tau})$'s, $\boldsymbol{\tau}\in\mathbb{E}/\mathbb{B}$, based on the channel (projected) LLRs \cite{ye2020recursive}; hence, the same principle follows for any BMS channels.


\begin{proposition}\label{prop_subcode}
Let $\mathcal{C}$ be a subcode of $\mathcal{RM}(m,r)$ with dimension $k$ such that $k_l< k< k_u$, where $k_l :=\sum_{i=0}^{r-1}\binom{m}{i}$, $r\in[m]$, and $k_u :=\sum_{i=0}^{r}\binom{m}{i}$. The projection of this code into $s$-dimensional subspaces of $\mathbb{E}$, $1\leq s\leq r-1$, results in subcodes of $\mathcal{RM}(m-s,r-s)$. It is also possible for the projected codes to be $\mathcal{RM}(m-s,r-s)$ or $\mathcal{RM}(m-s,r-1-s)$ codes.
\end{proposition}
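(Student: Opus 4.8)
The plan is to exploit the $\mathbb{F}_2$-linearity of the projection operation together with the nesting $\mathcal{RM}(m,r-1)\subseteq\mathcal{C}\subseteq\mathcal{RM}(m,r)$, which is guaranteed by the hypothesis $k_l<k<k_u$ and by the fact that the first $k_l$ rows of $\boldsymbol{G}_{k\times n}$ are exactly a generator matrix of $\mathcal{RM}(m,r-1)$. By Remark~1, projecting onto an $s$-dimensional subspace $\mathbb{B}$ merges the generator columns indexed by each coset, so it is a linear map on codewords and is therefore monotone under code inclusion. Invoking the standard fact (recalled from \cite{ye2020recursive}) that the projection of $\mathcal{RM}(m,r)$ onto any $s$-dimensional $\mathbb{B}$ equals $\mathcal{RM}(m-s,r-s)$, monotonicity immediately gives $\mathcal{C}_{/\mathbb{B}}\subseteq\mathcal{RM}(m-s,r-s)$, which is the first assertion. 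Applying the same fact to the lower code yields $\mathcal{RM}(m-s,r-1-s)=(\mathcal{RM}(m,r-1))_{/\mathbb{B}}\subseteq\mathcal{C}_{/\mathbb{B}}$ (this is where the restriction $s\le r-1$ is used, so that $r-1-s\ge0$). Hence $\mathcal{C}_{/\mathbb{B}}$ is always sandwiched between $\mathcal{RM}(m-s,r-1-s)$ and $\mathcal{RM}(m-s,r-s)$.

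To show that both endpoints of this sandwich are attainable, I would first compute the projection of a single weight-$2^{m-r}$ basis vector $\boldsymbol{v}_m(\mathcal{A})$ with $|\mathcal{A}|=r$ onto the coordinate subspace $\mathbb{B}=\langle e_1,\dots,e_s\rangle$. Summing the monomial $\prod_{i\in\mathcal{A}}z_i$ over each coset reduces, after factoring out the variables $z_{s+1},\dots,z_m$, to evaluating $\sum_{(z_1,\dots,z_s)\in\mathbb{F}_2^s}\prod_{i\in\mathcal{A}\cap[s]}z_i=2^{\,s-|\mathcal{A}\cap[s]|}\bmod 2$. Consequently the projection equals the degree-$(r-s)$ monomial $\prod_{i\in\mathcal{A}\setminus[s]}z_i$ when $[s]\subseteq\mathcal{A}$, and vanishes identically otherwise. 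This dichotomy is the key computational step and governs whether an extra generator contributes to the top degree $r-s$ of the projected code or disappears into $\mathcal{RM}(m-s,r-1-s)$.

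With this in hand, the two extremes follow by choosing the $k-k_l$ extra generators relative to $\mathbb{B}=\langle e_1,\dots,e_s\rangle$. If all of them are drawn from the $\binom{m}{r}-\binom{m-s}{r-s}$ index sets that do \emph{not} contain $[s]$, every extra generator projects to zero, so $\mathcal{C}_{/\mathbb{B}}=(\mathcal{RM}(m,r-1))_{/\mathbb{B}}=\mathcal{RM}(m-s,r-1-s)$; this is feasible whenever $k-k_l\le\binom{m}{r}-\binom{m-s}{r-s}$. Conversely, if the extra generators include all $\binom{m-s}{r-s}$ index sets of the form $[s]\cup\mathcal{A}_2$ with $\mathcal{A}_2\subseteq\{s+1,\dots,m\}$, $|\mathcal{A}_2|=r-s$, their projections sweep out every degree-$(r-s)$ monomial, so together with $\mathcal{RM}(m-s,r-1-s)$ they span all of $\mathcal{RM}(m-s,r-s)$; this is feasible whenever $k-k_l\ge\binom{m-s}{r-s}$. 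Since $\binom{m-s}{r-s}<\binom{m}{r}$ for $1\le s\le r-1$, both dimension windows are compatible with $k_l<k<k_u$, so each endpoint is realized by some admissible subcode.

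The main obstacle I anticipate is not the sandwich, which is essentially automatic from linearity and the cited RM projection identity, but rather the bookkeeping in the last two steps: verifying the coset-sum dichotomy for general $s$, confirming that the surviving projected monomials remain linearly independent and exhaust the required degree-$(r-s)$ layer, and checking that the number of selected extra generators can be kept strictly inside the window $(0,\binom{m}{r})$ imposed by $k_l<k<k_u$ while simultaneously meeting the inclusion constraints that force either extreme.
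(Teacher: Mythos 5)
Your proof is correct, and its first half coincides in substance with the paper's argument: both rest on Remark~1 (projection acts linearly, by merging generator-matrix columns along cosets), the identity from \cite{ye2020recursive} that $\mathcal{RM}(m,r)$ projects onto $\mathcal{RM}(m-s,r-s)$, and the nesting $\mathcal{RM}(m,r-1)\subseteq\mathcal{C}\subseteq\mathcal{RM}(m,r)$, which yields the sandwich $\mathcal{RM}(m-s,r-1-s)\subseteq\mathcal{C}_{/\mathbb{B}}\subseteq\mathcal{RM}(m-s,r-s)$. Where you genuinely diverge is in the treatment of the ``it is also possible'' clause. The paper handles it by a rank criterion stated for an arbitrary subcode and subspace: if the $k-k_l$ added rows do not increase the rank of the merged (projected) matrix, the projection is $\mathcal{RM}(m-s,r-1-s)$; if the $k_u-k$ removed rows do not contribute to that rank, it is $\mathcal{RM}(m-s,r-s)$; otherwise it is a proper subcode in between. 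This characterization is what the rest of the paper actually uses (the projected ranks $R_t$ drive the minRank/maxRank pruning and the complexity measure $\mathcal{L}$), but it leaves the attainability of the two extreme cases implicit. You instead prove attainability outright: the coset-sum computation showing that $\boldsymbol{v}_m(\mathcal{A})$ projects onto the monomial $\prod_{i\in\mathcal{A}\setminus[s]}z_i$ exactly when $[s]\subseteq\mathcal{A}$ and vanishes otherwise, followed by the two explicit row selections and the feasibility counts $k-k_l\le\binom{m}{r}-\binom{m-s}{r-s}$ and $k-k_l\ge\binom{m-s}{r-s}$ (both nonvacuous since $\binom{m-s}{r-s}<\binom{m}{r}$ under the hypothesis that admissible $k$ exist), gives concrete witnesses for each endpoint. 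So your version is more constructive and makes the existence claim airtight, at the price of working with a fixed coordinate subspace and special subcodes, whereas the paper's rank-based formulation applies uniformly to every projection of a given code, which is the form needed downstream.
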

\begin{proof}
The projection of $\mathcal{RM}(m,r)$ into $s$-dimensional subspaces, $1\leq s\leq r$ is an $\mathcal{RM}(m-s,r-s)$ code \cite{ye2020recursive}. The code $\mathcal{C}$ is constructed by removing $k_u-k$ rows of the generator matrix of $\mathcal{RM}(m,r)$ that are not in the generator matrix of $\mathcal{RM}(m,r-1)$ while the projection of $\mathcal{RM}(m,r-1)$ into $s$-dimensional subspaces, $1\leq s\leq r-1$, is an $\mathcal{RM}(m-s,r-1-s)$ code. Now, given that each $s$-dimensional projection is equivalent to partitioning $n$ columns of the generator matrix into $n/2^s$ groups of $2^s$ columns and adding them in the binary field (see Remark 1), the generator matrices of the projected codes contain rows of the generator matrix of $\mathcal{RM}(m-s,r-1-s)$ and, possibly, a subset of the rows of the generator matrix of $\mathcal{RM}(m-s,r-s)$ that are not in the generator matrix of $\mathcal{RM}(m-s,r-1-s)$. More precisely, if the selected additional $k-k_l$ rows do not contribute in the rank of the merged matrix according to a given subspace, the projected code into that subspace is an $\mathcal{RM}(m-s,r-1-s)$ code. On the other hand, if the removed $k_u-k$ rows do not contribute in that rank, the projected code is an $\mathcal{RM}(m-s,r-s)$ code. Otherwise, that projected code is a subcode of $\mathcal{RM}(m-s,r-s)$.
\end{proof}

Hereinafter, for the sake of brevity, we simply say that ``the projections of a subcode of $\mathcal{RM}(m,r)$ code into the $s$-dimensional subspaces of $\mathbb{E}$ are subcodes of $\mathcal{RM}(m-s,r-s)$''; however, we still mean the precise statement of \Pref{prop_subcode}. Now, we are ready to present our decoding algorithms for RM subcodes. Our algorithms are based on one-dimensional (1-D) subspaces. However, they can be easily generalized to the case of $s$-dimensional subspaces.


The subRPA algorithm is very similar to  the RPA algorithm. More precisely, it first projects the code $\mathcal{C}$, that is a subcode of $\mathcal{RM}(m,r)$, into 1-D subspaces to get subcodes of $\mathcal{RM}(m-1,r-1)$ at the next layer. It then recursively applies the subRPA algorithm to decode these projected codes. Next, it aggregates the decoding results of the  next layer with the output LLRs of the current layer (similar to \cite[Algorithm 4]{ye2020recursive}) to update the LLRs. Finally, it iterates this process several times to ensure the convergence of the algorithm, and takes the sign of the updated LLRs to obtain the decoded codewords.

The main distinction between subRPA algorithm and RPA algorithm, however, is the decoding of the projected codes at the bottom layer. Based on \Pref{prop_subcode}, after $r-1$ layers of 1-D projections, the decoder ends up with subcodes of $\mathcal{RM}(m-r+1,1)$ at the bottom layer. These projected codes can have different dimensions though all are less than or equal to $m-r+2$. Therefore, the subRPA algorithm, manageably, applies the MAP decoding at the bottom layer. Given that the projected codewords at the bottom layer are not all from the same codes, the MAP decoding should be carefully performed. Based on Remark 1, the projected codes at the bottom layer can be obtained from the so-called \textit{projected generator matrices} of dimension $k\times 2^{m-r+1}$, after $r-1$ times (binary) merging of the $2^m$ columns of the original generator matrix $\boldsymbol{G}_{k\times n}$. However, many of these $k$ rows of the projected generator matrices are linearly dependent. In fact, all of these matrices have ranks (i.e., code dimensions) of less than or equal to $m-r+2$. In order to facilitate the MAP decoding at the bottom layer, we can pre-compute and store the codebook of each projected code at the bottom layer. Particularly, let $R_{t}$ be the rank of the $t$-th projected generator matrix $\boldsymbol{G}_p^{(t)}$ at the bottom layer, $t\in[T]$, where $T$ is the total number of projected codes at the bottom layer (that depends on the number of layers as well as the number of projections per layer). Now, we can pre-compute the codebook $\mathcal{C}_p^{(t)}$ that contains the $2^{R_{t}}$ length-($n/2^{r-1}$) codewords $\boldsymbol{c}^{(t)}_{p,i_t}$, $i_t\in[2^{R_{t}}]$, of the $t$-th projected code at the bottom layer. Now, given the projected LLR vector $\boldsymbol{l}^{(t)}_p$ of length $n/2^{r-1}$ at the bottom layer, we pick the codeword $\boldsymbol{c}^{(t)}_{p,i^*}$ that maximizes the MAP rule for BMS channels \cite{ye2020recursive}, i.e., 
\begin{align}\label{map}
\hat{\boldsymbol{y}}_{t}=\boldsymbol{c}^{(t)}_{p,i^*},~~ \text{s.t.} ~~~~ i^*=\operatorname*{argmax}_{i_t\in[2^{R_{t}}]}~~ \langle\boldsymbol{l}^{(t)}_p,1-2{\boldsymbol{c}^{(t)}_{p,i_t}}\rangle,
\end{align}
where $\langle \cdot,\cdot\rangle$ denotes the inner (dot) product of two vectors.

\subsection{Soft-SubRPA Algorithm}\label{sec_Soft-subRPA}
In this section, we derive the soft-decision version of the subRPA algorithm, referred to as ``soft-subRPA'' in this paper. The soft-subRPA algorithm obtains soft decisions at the bottom layers instead of performing hard MAP decodings; this process is called ``soft-MAP'' in this paper. Additionally, the decoder applies a different rule to aggregate the soft decisions obtained from the next layers with the LLRs available at the current layer; we refer to this aggregation process as ``soft-aggregation''. The soft-subRPA algorithm not only improves upon the performance of the subRPA but also replaces the hard MAP decodings at the bottom layer with a differentiable operation that, in turn, enables training an ML model as delineated in Section \ref{sec_nn}.

The soft-MAP algorithm for making soft decisions on the projected codes at the bottom layer, that are subcodes of first-order RM codes, is presented in Algorithm \ref{alg_softmap} for the case of additive white Gaussian noise (AWGN) channels. The process is comprised of two main steps : 1) obtaining the LLRs of the information bits, and 2) obtaining the soft decisions (i.e., LLRs) of the coded bits using that of information bits. Note that we invoke \textit{max-log} and \textit{min-sum} approximations, to be clarified later, in Algorithm \ref{alg_softmap}. For the sake of brevity, let us drop the superscript $t$. Particularly, let $R$ be the rank of the projected generator matrix $\boldsymbol{G}_p$ of a projected code at the bottom layer with codebook $\mathcal{C}_p$. Also, assume a $2^R\times k$ matrix $\boldsymbol{U}$ that lists all $2^R$ length-$k$ sequences of bits that produce the codebook $\mathcal{C}_p$ (through modulo-$2$ matrix multiplication $\boldsymbol{U}\boldsymbol{G}_p$). Note that only $R$ indices of these length-$k$ sequences contain the information bits and the remaining indices are always fixed to either $0$ or $1$. The objective of the first step is to obtain the LLRs of these $R$ information bits using the available projected LLR vector $\boldsymbol{l}_p$. This can be done, using \eqref{llr_inf} in Appendix \ref{appnd_LLR_inf} invoking max-log approximation, as described in Algorithm \ref{alg_softmap}. Note that the LLRs of the $k-R$ indices that do not contain the information bits are set to zero.

\begin{algorithm}[t]
	\caption{Soft-MAP Algorithm for AWGN Channels}    \label{alg_softmap}
	\textbf{Input:} The LLR vector $\boldsymbol{l}_p$; the generator matrix $\boldsymbol{G}_p$; the codebook $\mathcal{C}_p$; and the matrix $\boldsymbol{U}$ of the information sequences
	
	\textbf{Output:} Soft decisions (i.e., the updated LLR vector) $\hat{\boldsymbol{l}}$
\vspace*{0.05in}
	\begin{algorithmic}[1]
		\State Set $k$ equal to the number of rows in $\boldsymbol{G}_p$
		\State  Initialize $\boldsymbol{l}_{\rm inf}$ as an all-zero vector of length $k$
		\State $\boldsymbol{\tilde{C}}\gets 1-2{\boldsymbol{C}}$ \Comment $\boldsymbol{C}$ is the codebook matrix (in binary)
		\State $\boldsymbol{\tilde{l}}\gets \boldsymbol{l}_p\boldsymbol{\tilde{C}}^{T}$ \Comment matrix mul. of $\boldsymbol{l}_p$ with the transpose of $\boldsymbol{\tilde{C}}$
		\For {$i=1,2,\cdots,k$} \Comment obtaining inf. bits LLRs
		\If {$\boldsymbol{U}(:,i)$ (the $i$-th column) is not fixed to $0$ or $1$}
		\State $\displaystyle\boldsymbol{l}_{\rm inf}(i) \gets \operatorname*{max}_{i'\in\{i':\boldsymbol{U}(i'\!,i)=0\}}\!\boldsymbol{\tilde{l}}(i')~-\hspace{-0.1cm}\operatorname*{max}_{i'\in\{i':\boldsymbol{U}(i'\!,i)=1\}}\!\boldsymbol{\tilde{l}}(i')$
		\EndIf
		\EndFor 
		\State Set $n'$ equal to the number of columns in $\boldsymbol{G}_p$
		\State  Initialize $\boldsymbol{l}_{\rm enc}$ as an all-zero vector of length $n'$
		\State $\boldsymbol{L}\gets \texttt{repeat}(\boldsymbol{l}_{\rm inf}^T,1,n')$ \Comment  make $n'$ copies of $\boldsymbol{l}_{\rm inf}^T$
		\State $\boldsymbol{V}\gets \boldsymbol{L}\odot \boldsymbol{G}_p$ \Comment element-wise matrix multiplication
		\For {$j=1,2,\cdots,n'$}
		\State $\boldsymbol{v}\gets$ vector containing nonzero elements of $\boldsymbol{V}(:,j)$
		\State $\boldsymbol{l}_{\rm enc}(j)\gets\prod_{j'}{\rm sign}(\boldsymbol{v}(j')) \times \operatorname*{min}_{j'} |\boldsymbol{v}(j')|$
		\EndFor 
		\State $\hat{\boldsymbol{l}}\gets\boldsymbol{l}_{\rm enc}$
		\State \textbf{return} $\hat{\boldsymbol{l}}$
	\end{algorithmic}
\end{algorithm}

Once we have the LLRs of the information bits, we can combine them according to the columns of $\boldsymbol{G}_p$ to obtain the LLRs of the encoded bits $\boldsymbol{l}_{\rm enc}$. The codewords in $\mathcal{C}_p$ are obtained by the multiplication of $\boldsymbol{U}\boldsymbol{G}_p$, i.e., each $j$-th coded bit, $j\in[n']$, where $n'$ is the code length, is obtained based on the linear combination of the information bits $u_i$'s according to the $j$-th column of $\boldsymbol{G}_p$. Therefore, we can apply the well-known min-sum approximation to calculate the LLR vector of the coded bits as $\boldsymbol{l}_{\rm enc}:=(\boldsymbol{l}_{\rm enc}(j), j\in[n'])$, where
\begin{align}\label{llr_enc}
\boldsymbol{l}_{\rm enc}(j)=\prod_{i\in \Delta_{j}}{\rm sign}(\boldsymbol{l}_{\rm inf}(i)) \times \operatorname*{min}_{i\in \Delta_{j}} |\boldsymbol{l}_{\rm inf}(i)|,
\end{align}
where $\Delta_{j}$ is the set of indices defining the nonzero elements in the $j$-th column of $\boldsymbol{G}_p$. This process is summarized in Algorithm \ref{alg_softmap} in an efficient way. The decoder may also iterate the whole process several times to assure the convergence of the soft-MAP algorithm.

Finally, given the soft decisions at the bottom layer, the decoder needs to aggregate the decisions with the current LLRs. 
In the following, we first define the ``soft-aggregation'' scheme as an extension of the aggregation method in \cite[Algorithm 4]{ye2020recursive} for the case of soft decisions.

\begin{definition}[Soft-Aggregation]\label{def_softaggr}
Let $\boldsymbol{l}$ be the vector of the channel LLRs, with length $n=2^m$, at a given layer. Suppose that there are $Q$ 1-D subspaces $\mathbb{B}_q$, $q\in[Q]$, to project this LLR vector at the next layer (in the case of full-projection decoding, there are $n-1$ 1-D subspaces, hence $Q=n-1$). 
Also, let $\boldsymbol{\hat{l}}_q$ denote the length-$n/2$ vector of soft decisions of the projected LLRs according to Algorithm \ref{alg_softmap}. The ``soft-aggregation'' of $\boldsymbol{l}$ and $\boldsymbol{\hat{l}}_q$'s is defined as a length-$n$ vector $\tilde{\boldsymbol{l}}:=(\tilde{\boldsymbol{l}}(\boldsymbol{z}), \boldsymbol{z}\in\mathbb{F}_2^m)$ where 
\begin{align}\label{eq_softaggr}
\tilde{\boldsymbol{l}}(\boldsymbol{z})=\frac{1}{Q}\sum_{q=1}^{Q}\tanh\big(\boldsymbol{\hat{l}}_q\left([\boldsymbol{z}+\mathbb{B}_q]\right)/2\big) \boldsymbol{l}(\boldsymbol{z}\oplus\boldsymbol{z}_q).
\end{align}
where $\boldsymbol{z}_q$ is the nonzero vector of the 1-D subspace $\mathbb{B}_q$, and $[\boldsymbol{z}+\mathbb{B}_q]$ is the coset containing $\boldsymbol{z}$ for the projection into $\mathbb{B}_q$. 
\end{definition}
In order to observe \eqref{eq_softaggr}, recall that the objective of the aggregation step is to update the length-$n$ channel LLR vector $\boldsymbol{l}$ to $\tilde{\boldsymbol{l}}$ given the soft decisions of the projected codes.  $\boldsymbol{\hat{l}}_q\left([\boldsymbol{z}+\mathbb{B}_q]\right)$ severs as a soft estimate of the binary addition of the coded bits at positions $\boldsymbol{z}$ and $\boldsymbol{z}\oplus\boldsymbol{z}_q$. Hence, following the same arguments as \cite{ye2020recursive}, if that combined bit is $0$, then the updated LLR at position $\boldsymbol{z}$ should take the same sign as the channel LLR at position $\boldsymbol{z}\oplus \boldsymbol{z}_q$. Note that this happens with probability $a_0:=1/\big[1+\exp\big(-\boldsymbol{\hat{l}}_q\left([\boldsymbol{z}+\mathbb{B}_q]\right)\big)\big]$. Similarly, with probability $a_1:=1/\big[1+\exp\big(\boldsymbol{\hat{l}}_q\left([\boldsymbol{z}+\mathbb{B}_q]\right)\big)\big]$ the combined bit is $1$, and hence the updated LLR at position $\boldsymbol{z}$ and $\boldsymbol{l}(\boldsymbol{z}\oplus\boldsymbol{z}_q)$ should have different signs. Therefore, given a projection subspace $\mathbb{B}_q$, one can update the channel LLR as $a_0\times\boldsymbol{l}(\boldsymbol{z}\oplus\boldsymbol{z}_q)+a_1\times-\boldsymbol{l}(\boldsymbol{z}\oplus\boldsymbol{z}_q)$. Taking the average over all $Q$ projections then yields the soft-aggregation rule in \eqref{eq_softaggr}.

It is worth mentioning that one can also apply the following equation to update the channel LLR as
\begin{align}\label{eq_logsum}
\tilde{\boldsymbol{l}}_{\rm ls}(\boldsymbol{z})=\frac{1}{Q}\sum_{q=1}^{Q}\ln\left(\frac{1+{\rm e}^{\boldsymbol{\hat{l}}_q\left([\boldsymbol{z}+\mathbb{B}_q]\right)+\boldsymbol{l}(\boldsymbol{z}\oplus\boldsymbol{z}_q)}}{{\rm e}^{\boldsymbol{\hat{l}}_q\left([\boldsymbol{z}+\mathbb{B}_q]\right)}+{\rm e}^{\boldsymbol{l}(\boldsymbol{z}\oplus\boldsymbol{z}_q)}}\right).
\end{align}
The rationale behind \eqref{eq_logsum} follows similar arguments as above and then deriving the LLR of the sum of two binary random variables given the LLRs of each of them. Therefore, \eqref{eq_logsum} is an exact expression assuming  independence among the involved LLR components.   Our empirical observations, however, suggest almost identical results for either aggregation methods. Therefore, given the complexity of computing expressions like \eqref{eq_logsum}, one can reliably apply our proposed soft-aggregation method in Definition \ref{def_softaggr}.

\begin{figure}[t]
	\centering
	\includegraphics[trim=0.5cm 0.2cm 0 0,width=3.6in]{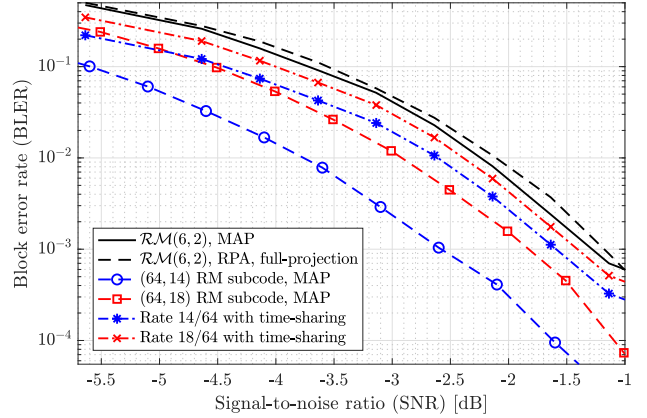}
	\caption{Simulation results for the BLER of various codes with MAP decoding. The comparison with the time-sharing scheme between $\mathcal{RM}(6,1)$ and $\mathcal{RM}(6,2)$ to achieve the same rates $14/64$ and $18/64$ is also included.}
	\label{fig1}
\end{figure}
\subsection{Encoding Insights}\label{sec_encoding}
The main objective of this paper is to develop schemes for decoding RM subcodes with low complexity. In this subsection, we provide some insights on how the design of the encoder can affect the decoding complexity as well as the performance. First, in order to further highlight the efficiency of RM subcodes, in Figure \ref{fig1}, we compare the block error rate (BLER) performance of RM subcodes with time-sharing (TS) between RM codes given the optimal MAP decoding.
We consider two RM subcodes with parameters $(n,k)=(64,14)$ and $(64,18)$. The generator matrix construction for these codes is based on having the largest ranks for the projected generator matrices which will be clarified at the end of this subsection. The TS performance is obtained assuming that the transmitter employs an $\mathcal{RM}(6,2)$ encoder in $\alpha$ portion of the time and an $\mathcal{RM}(6,1)$ encoder in the remaining $(1-\alpha)$ portion, where $\alpha=7/15$ and $11/15$, to achieve the same code rates $14/64$ and $18/64$, respectively. It is observed that the RM subcodes with the rates $14/64$ and $18/64$ achieve more than $1$ \si{dB} and $0.4$ \si{dB} gains, respectively, compared to the TS counterparts. Also, the performance of the RM subcode with rate $18/64$ is almost $0.2$ \si{dB} better than the performance of the lower rate code with TS. Note that all the simulation results in this paper are obtained from more than $10^5$ trials of random codewords (except $\mathcal{RM}(6,2)$ with MAP decoding that has $10^4$ trails). Additionally, throughout the paper, we define the signal-to-noise ratio (SNR) as ${\rm SNR}:=1/(2\sigma^2)$ and the energy-per-bit $E_b$ to the noise ratio as $E_b/N_0:=n/(2k\sigma^2)$, where $\sigma^2$ is the noise variance.

\begin{figure}[t]
	\centering
	\includegraphics[trim=0.5cm 0.2cm 0 0,width=3.6in]{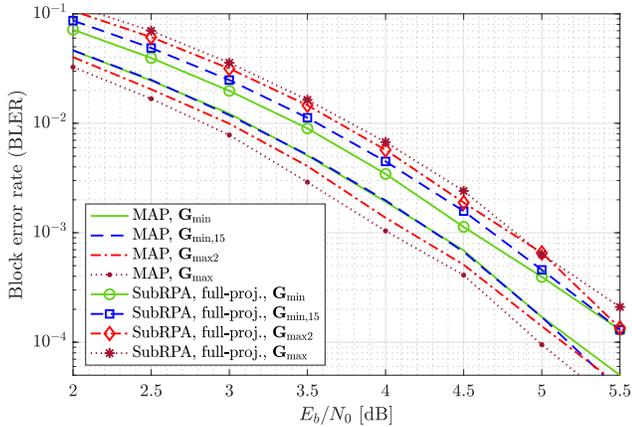}
	\caption{Simulation results for the $(64,14)$ RM subcodes with MAP and subRPA decoders given four different selections of the generator matrix $\boldsymbol{G}_{k\times n}$.}
	\label{fig2}
\end{figure}
As discussed earlier, our decoding algorithms perform MAP or soft-MAP decoding at the bottom layer. Also, the dimension of the projected codes at the bottom layer (i.e., the rank of the projected generator matrices) can be different. This is in contrast to RM codes that always result in the same dimension for the projected codes at the bottom layer. Therefore, an immediate approach for encoding RM subcodes to achieve a lower decoding complexity is to  construct the code generator matrix such that the projected codes at the bottom layer have smaller dimensions, and thus the decodings at the bottom layer have lower complexities. In other words, let $\mathcal{L}:=\sum_{t=1}^{T}2^{R_{t}}$ represent a rough evaluation of the decoding complexity at the bottom layer, i.e., the decoding complexity at the bottom layer is roughly a constant times $\cL$. Then, among all $\binom{k_u-k_l}{k-k_l}$ possible selections of the generator matrix $\boldsymbol{G}_{k\times n}$, we can choose the ones that achieve a smaller $\mathcal{L}$. This encoding scheme leads to reduction in the decoding complexity of our algorithms but it also affects the performance. In order to investigate the effect of this methodology, in Figure \ref{fig2}, we consider four different selections of the generator matrix for the $(64,14)$ RM subcode. In particular, $\boldsymbol{G}_{\rm max}$ and $\boldsymbol{G}_{{\rm max}2}$ have the largest and the second largest values of $\mathcal{L}=2568$ and $2532$, respectively. Also, $\boldsymbol{G}_{\rm min}$ has the minimum value of $\mathcal{L}=1482$. And, $\boldsymbol{G}_{{\rm min},15}$ has the minimum value of $\sum_{t}2^{R_{t}}=108$ on $15$ projections but a relatively large value of $\mathcal{L}=2412$ on all $63$ projections. Throughout the simulation results in this paper, the number of outer iterations for our recursive algorithms is set to $N_{\rm max}=3$ to assure the convergence of the algorithms. Figure \ref{fig2} suggests a slightly better performance for the MAP decoder for larger values of $\mathcal{L}$. However, surprisingly, our decoding algorithm exhibits a completely opposite behavior, i.e., a better performance is achieved for our subRPA algorithm with smaller values of $\mathcal{L}$. This is then a two-fold gain: a better performance for an encoding scheme that results in a lower complexity for our decoding algorithm. We did extensive sets of experiments which all confirm this \textit{empirical} observation. However, still, further investigation is needed to precisely characterize the performance-complexity trade-off as a result of the encoding process.

\subsection{Projection Pruning}\label{sec_projprun}
One direction for reducing the complexity of our decoding algorithms is to prune the number of projections at each layer. Particularly, let us assume that, at each layer and point in the decoding tree diagram, the complexity of decoding each branch (that corresponds to a given projection) is the same. This is not precisely true given that the projected codes at the bottom layer may have different dimensions. Also, we assume that the complexity of the aggregations performed at each layer is the same. Then, pruning the number of projections by a factor $\beta\in(0,1)$ is roughly equivalent to reducing the complexity by a factor of $\beta$ at each layer. In other words, if we have a subcode of $\mathcal{RM}(m,r)$, then there are $r-1$ layers in the decoding tree and hence, the projection pruning exponentially reduces the decoding complexity by a factor of $\beta^{r-1}$. This is essential to make the decoding of higher order RM subcodes practical. One can also opt to choose a constant number of projections per layer (i.e., prune the number of projections at upper layers with smaller factors) to avoid high-degree polynomial complexities.

Given that the projected codes at the bottom layer can have different dimensions (in contrast to RM codes), the projection subspaces should be carefully selected to reduce the complexity without having a notable effect on the decoding performance. Our empirical results show that the choice of the sets of projections can significantly affect the decoding performance.
To see that, in Figure \ref{fig3}, we consider the generator matrix $\boldsymbol{G}_{{\rm min},15}$ for encoding a $(64,14)$ RM subcode. In addition to full-projection decoding (i.e., $63$ 1-D subspaces), we also evaluate the performance of subRPA and soft-subRPA with $15$ projections picked according to three different projection pruning schemes. First, we consider a subset of $15$ subspaces that results in maximum ranks for the projected generator matrices at the bottom layer. In this setting, denoted by ``maxRank'' in Figure \ref{fig3}, all the $15$ projections result in the same rank of $6$. It is observed that this selection of the projections significantly degrades the performance (almost $1$ \si{dB} gap with full-projection decoding). Our extensive simulation results with other generator matrices and code parameters also confirm the same observation that, although it requires a higher complexity for MAP or soft-MAP decoding of the projected codes at the bottom layer, the ``maxRank'' selection fails to achieve a good performance compared to other considered pruning schemes.

Next, we consider the other extreme of projection selection, i.e., we select $15$ subspaces that result in minimum ranks for the projected codewords (``minRank'' scheme in Figure \ref{fig3}). In this case, three of the ranks are equal to $2$ and the remaining are equal to $3$. Therefore, the decoder in this case can perform the MAP and soft-MAP decodings at the bottom layer almost $8$ times faster than in maxRank selection. Surprisingly, despite its lower complexity compared to the maxRank selection, the minRank selection is capable of achieving very close to the performance of full-projection decoding ($\approx 0.1$ \si{dB} gap for both subRPA and soft-subRPA). Our additional simulation results also confirm the same observation and hence, establish the promising advantages of minRank projection pruning scheme in significantly reducing the decoding complexity while maintaining a negligible gap with the performance of full-projection decoding.

\begin{figure}[t]
	\centering
	\includegraphics[trim=0.5cm 0.2cm 0 0,width=3.6in]{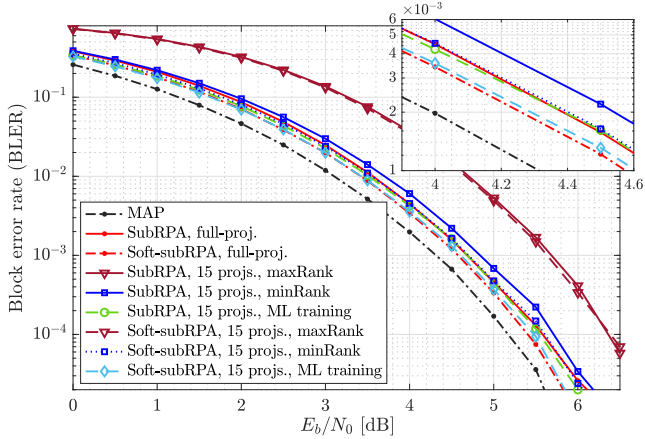}
	\caption{Performance of subRPA and soft-subRPA with full-projection decoding as well as different projection pruning schemes, i.e., picking according to the minimum ranks, maximum ranks, and training a machine learning model. The generator matrix $\boldsymbol{G}_{{\rm min},15}$ is considered for the encoding process.}
	\label{fig3}
\end{figure}

Even though minRank selection scheme is capable of achieving very close to the performance of full-projection decoding, we cannot guarantee that it is the best selection in terms of minimizing the decoding error rate. In practice, we may want to prune most of projections per layer to allow efficient decoding at higher rates (equivalently, higher order RM subcodes) with a manageable complexity. In such scenarios, we may, inevitably, have a meaningful gap with full-projection decoding, more than what we observed here for minRank selection (i.e., $\approx 0.1$ \si{dB}). Therefore, one needs to ensure that the sets of the selected projections are the ones that minimize the decoding error rate, i.e., the gap to the full-projection decoding. In the next subsection, we shed light on how the proposed soft-subRPA algorithm enables training a ML model to search for optimal sets of projections. This then establishes that the combination of our soft-subRPA with our ML model enables efficient decoding (in terms of both decoding error rate and complexity) of RM subcodes. To see the potentials of this scheme, in Figure \ref{fig3} we included the results of our decoding algorithms with $15$ projections obtained via training our ML model. It seems that the trained model also has tendency to pick projections that result in smaller ranks for the projected generator matrices, i.e., $3$ rank-$2$, $6$ rank-$3$, and $6$ rank-$4$ projections are picked by the ML model.
Figure \ref{fig3} demonstrates identical performance to full-projection decoding, for both subRPA and soft-subRPA algorithms, which is the best one can hope for with the pruned-projection decoding. Additionally, it is observed that the soft-subRPA algorithm can almost $0.1$ \si{dB} improve upon the performance of the subRPA algorithm.

\subsection{Training an ML Model for Projection Pruning}
\label{sec_nn}
As explained earlier, the goal is to train an ML model to find the best subset of projections. To do so, we assign a weight metric $w_q$ to each $q$-th projection such that $w_q\in[0,1]$ and $\sum_{q=1}^Qw_q=1$, where $Q$ is the number of full projections for a given (projected) code in the decoding process. The objective is then to train an ML model  to pick a subset of $Q_0$ projections (i.e., prune the number of projections by a factor $\beta=Q_0/Q$) that minimize the training loss. Building upon the success of stochastic gradient descent methods in training complex models, we want to use gradients for this search. In other words, the ML model updates the weight vector $\boldsymbol{w}:=(w_q, q\in[Q])$ such that picking the $Q_0$ projections corresponding to the largest weights results in the best performance.

There are two major challenges in training the aforementioned ML model. First, the MAP decoding that needs to be performed at the bottom layer (see \eqref{map}) is not differentiable since it involves the ${\rm argmax}(\cdot)$ operation which is not a continuous function. Therefore, one cannot apply the gradient-based training methods to our subRPA algorithm. However, the proposed soft-subRPA algorithm overcomes this issue by replacing the non-differentiable MAP decoder at the bottom layer with the differentiable soft-MAP decoder\footnote{Note that the soft-MAP algorithm involves $\max(\cdot)$ function which, unlike ${\rm argmax}(\cdot)$, is a continuous function. Also, the derivative of the function $\max(0,x)$ is defined everywhere except in $x=0$ which is a rare event to happen. Accordingly, advanced training tools, such as PyTorch library (that is used in this research), easily handle and treat $\max(\cdot)$ as a differentiable function. For example, the rectified linear unit function ${\rm ReLU}(x):=\max(0,x)$ is a widely used activation function in  neural networks.}. 
The second issue is that the combinatorial selection of $Q_0$ largest elements of the vector $\boldsymbol{w}$ is not differentiable. To address this issue, we apply the SOFT (Scalable Optimal transport-based diFferenTiable) top-$k$ operator,
proposed very recently in \cite{xie2020differentiable}, to obtain a smoothed approximation of the top-$k$ operator whose gradients can be efficiently approximated. It is worth mentioning that the SOFT top-$k$ function is a generalization of the soft-max function, which is a soft version of the  $\rm{argmax}$ function. In other words, the SOFT top-$k$ function can be viewed as a soft version of top-$k$ function.

Next, the training procedure is briefly explained. We use the PyTorch library of Python to first implement our soft-subRPA decoding algorithm in a fully differentiable way for the purpose of gradient-based training. We initialize the weight vector as $\boldsymbol{w}_0:=(1/Q,\cdots, 1/Q)$, i.e., equal weights for all the projections. For each training iteration, we randomly generate a batch of $B$ codewords of the RM subcode, and compute their corresponding LLR vectors given a carefully chosen training SNR. Then we input these LLR vectors to our decoder to obtain the soft decisions at each layer. During the soft-aggregation step, instead of unweighted averaging of \eqref{eq_softaggr}, we take the weighted averages of the soft decisions at all $Q$ projections as $\tilde{\boldsymbol{l}}(\boldsymbol{z})=\sum_{q=1}^{Q}w_q\tanh\big(\boldsymbol{\hat{l}}_q\left([\boldsymbol{z}+\mathbb{B}_q]\right)/2\big) \boldsymbol{l}(\boldsymbol{z}\oplus\boldsymbol{z}_q)$.
Ideally, the top-$k$ operator should return nonzero weights only for the top $Q_0$ elements. However, due to the smoothed SOFT top-$k$ operator, all $Q$ elements of $\boldsymbol{w}$ may get nonzero weights though the weights for the $Q-Q_0$ smaller elements are very small. Therefore, the above weighted average is approximately equal to the weighted average over only the largest $Q_0$ weights (i.e., pruned-projection decoding).
Note that we apply the same procedure for all (projected) RM subcodes at each node and layer of the recursive decoding algorithm while we define different weight vectors (and also $Q_0$'s) for each sets of projections corresponding to each (projected) codes. We also consider fixed weight vectors for decoding all $B$ codewords at each iteration.

The ML model then updates all weight vectors at each iteration to iteratively minimize the training loss. To do so, we apply the ``Adam'' optimization algorithm \cite{kingma2014adam} to minimize the training loss while using ``BCEWithLogitsLoss'' \cite{nn_loss} as the loss function which efficiently combines a sigmoid layer with the binary cross-entropy (BCE) loss. By computing the loss function between the true labels from the generated codewords and the predicted LLRs from the decoder output, the optimizer then moves one step forward by updating the model, i.e., the weight vectors. Finally, once the model converges after enough number of iterations, we save the weight vectors for the sake of optimal projection pruning. Note that in order to reduce the decoding complexity and the overload of training process, we only train the model for a given, properly chosen, training SNR. In other words, once the training is completed, we fix the subsets of projections according to the largest values of the weight vectors. We then test the performance of our algorithms given the fixed decoder (i.e., the fixed subsets of projections) for all codewords and all SNR points. One can apply the same procedure to train the model for each SNR point, or even actively for each LLR vector, to possibly improve upon the performance of our \textit{fixed} projection pruning scheme at the expense of increased training overload.

\section{Conclusions}\label{conc}
In this paper, we designed efficient decoding algorithms for decoding subcodes of RM codes. 
More specifically,
we first proposed a general recursive algorithm, called subRPA, for decoding RM subcodes. Then we derived a soft-decision based version of our algorithm, called soft-subRPA, that not only improved upon the performance of the subRPA algorithm but also enabled a differentiable implementation of our decoding algorithm for the purpose of training a machine learning model. Accordingly, 
we proposed an efficient pruning scheme that finds the best subsets of projections via training a machine learning model.
Our simulation results on a $(64,14)$ RM subcode demonstrate as good as the performance of full-projection decoding for our machine learning-aided decoding algorithms with more than $4$ times smaller number of projections.
The research in this paper can be extended in many directions such as training machine learning models to design efficient encoders for RM subcodes and also leveraging higher dimension subspaces for projections to, possibly, further reduce the decoding complexity.

\appendices
\section{LLRs of the Information Bits}\label{appnd_LLR_inf}
Consider an AWGN channel model as $\boldsymbol{y}=\boldsymbol{s}+\boldsymbol{n}$, where $\boldsymbol{s}=1-2{\boldsymbol{c}}$, $\boldsymbol{c}\in \mathcal{C}$, and $\boldsymbol{n}$ is the AWGN vector with mean zero and variance $\sigma^2$ elements. Then, the LLR of the $i$-th information bit $u_i$ can be obtained using max-log approximation as
\begin{align}\label{llr_inf}
\boldsymbol{l}_{\rm inf}(i) \approx \operatorname*{max}_{\boldsymbol{c}\in\mathcal{C}_i^0}~\langle \boldsymbol{l}, 1-2{\boldsymbol{c}}\rangle~-~ \operatorname*{max}_{\boldsymbol{c}\in\mathcal{C}_i^1}~\langle \boldsymbol{l}, 1-2{\boldsymbol{c}}\rangle,
\end{align}
where $\boldsymbol{l}:=2\boldsymbol{y}/\sigma^2$ is the LLR vector of the AWGN channel, and $\mathcal{C}_i^0$ and $\mathcal{C}_i^1$ are subsets of codewords that have the $i$-th information bit $u_i$ equal to zero or one, respectively.
To see this, observe that
\begin{align}\label{llrinf1}
\boldsymbol{l}_{\rm inf}(i) :=& \ln\left(\frac{\Pr(u_i=0|\boldsymbol{y})}{\Pr(u_i=1|\boldsymbol{y})}\right)\nonumber\\
\stackrel{(a)}{=}&\ln\left(\frac{\sum_{\boldsymbol{s}\in\mathcal{C}_i^0}\exp\left(-||\boldsymbol{y}-\boldsymbol{s}||_2^2/\sigma^2\right)}{\sum_{\boldsymbol{s}\in\mathcal{C}_i^1}\exp\left(-||\boldsymbol{y}-\boldsymbol{s}||_2^2/\sigma^2\right)}\right)\nonumber\\
\stackrel{(b)}{\approx}&\frac{1}{\sigma^2}\operatorname*{min}_{\boldsymbol{c}\in\mathcal{C}_i^1}||\boldsymbol{y}-\boldsymbol{s}||_2^2-\frac{1}{\sigma^2} \operatorname*{min}_{\boldsymbol{c}\in\mathcal{C}_i^0}||\boldsymbol{y}-\boldsymbol{s}||_2^2,
\end{align}
where step $(a)$ is by applying the Bayes' rule, the assumption $\Pr(u_i=0)=\Pr(u_i=1)$, the law of total probability, and the distribution of Gaussian noise. Moreover, step $(b)$ is by the max-log approximation. Finally, given that all $\boldsymbol{s}$'s have the same norm, we obtain \eqref{llr_inf}.



\end{document}